\newtheorem{claim}{Claim}
\title{Diameter constrained Steiner tree and related problems}
\author{Prashanth Amireddy}
\author{Chetan Sai Digumarthi}
\affil{Department of Computer Science and Engineering,\\Indian Institute of Technology Madras, India.\\
	\texttt{\{cs16b001,cs16b005\}@cse.iitm.ac.in }}
\keywords{Diameter constrained Steiner tree, min-degree constrained Steiner tree, dynamic programming, polynomial time reductions}
\begin{document}
\maketitle

\begin{abstract}
	We give a dynamic programming solution to find the minimum cost of a diameter constrained Steiner tree in case of directed graphs. Then we show a simple reduction from undirected version to the directed version to realize an algorithm of similar complexity i.e, FPT in number of terminal vertices. Other natural variants of constrained Steiner trees are defined by imposing constraints on the min-degree and size of the Steiner tree and some polynomial time reductions among these problems are proven. To the best of our knowledge, these fairly simple reductions are not present in the literature prior to our work.
\end{abstract}
\section{Introduction}
\label{sec:intro}
A \textit{Steiner tree} of a simple undirected graph is a sub-tree that contains a given fixed subset of vertices called \textit{terminals}. The minimum Steiner tree (ST) refers to a Steiner tree with the least possible overall \textit{cost} of its edges. If the edges are unweighted, the cost of each edge is taken as unity, otherwise the cost (or sometimes referred to as weight) would be a positive integer. The ST problem has found numerous applications in a wide variety of fields, like network design and computational biology \cite{HR92, DSR00}. Although this problem is NP-hard \cite{garey1979guide}, several approximation algorithms have been known in the literature \cite{testapprox, imporveapprox, arora}. Most often, all the terminals can be assumed to be leaves without loss of generality \cite{test, testapprox, MARTINEZ2007133}. The solution to the problem may be easy depending on the \textit{topology} of the objective tree as well \cite{ding2011diameter, Melzak1961OnTP}. We study this problem under different constrained settings and design FPT algorithms for it and observe that some of the constrained versions reduce to one another. We begin by defining four modifications of the ST problem we are interested in solving.

\subsection{Constrained ST Problems}

\noindent\fbox{
	\parbox{13.5cm}{
		{\sc Diameter Constrained Steiner Tree (DCST)}  \\
		\textsf{\bfseries Input $(V, E, T, D)$ : } A non-negative-weighted simple undirected graph $G=(V, E)$, a set $T \subseteq V$ of terminals, and an integer $D > 0$. \\
		\textsf{\bfseries Output:} Minimum weight of a sub-tree of $G$ containing all the vertices in $T$ such that its diameter is at most $D$. If no such sub-tree exists, the output is FAIL. 
		
		\textit{Notation -} Here, the weight of a sub-tree $H$ of $G$, denoted by $wt(H)$ refers to the sum of weights of its edges. We will refer to a constrained sub-tree that attains the minimum weight as an optimal Steiner tree.

		\vspace{3mm}
		
		{\sc Directed Diameter Constrained Steiner Tree (DDCST)}  \\
		\textsf{\bfseries Input $(V, E, T, r, D)$:} A non-negative-weightedweighted directed graph  $G=(V,E)$ such that $E(i,i)=0$, and $E(i,j)$ and $E(j,i)$ can be unrelated for $i \ne j$, a set $T \subseteq V$ of terminals, a root $r \in V\setminus T$, and an integer $D > 0$. \\
		\textsf{\bfseries Output:} Minimum weight of a rooted sub-tree (rooted at $r$) of $G$ containing all the vertices in $T$ such that its diameter is atmost $D$. If no such sub-tree exists, the output is FAIL. 
		
		\vspace{3mm}
		
		{\sc Minimum-degree Constrained Steiner Tree (MCST)}  \\
		\textsf{\bfseries Input $(V, E,T,\Delta)$ : } A non-negative-weightedweighted simple undirected graph $G=(V,E)$, a set $T \subseteq V$ of terminals, and an integer $\Delta>0$. \\
		\textsf{\bfseries Output:} Minimum weight of a sub-tree of $G$ containing all the vertices in $T$ such that each internal node in the tree has degree atleast $\Delta$. If no such sub-tree exists, the output is FAIL.

		\vspace{3mm}
		
		{\sc Size Constrained Steiner Tree (SCST)}  \\
		\textsf{\bfseries Input $(V, E,T,\zeta)$ : } A non-negative-weightedweighted simple undirected graph $G=(V,E)$, a set $T \subseteq V$ of terminals, and an integer $\zeta\ge |T|$. \\
		\textsf{\bfseries Output:} Minimum weight of a sub-tree of $G$ containing all the vertices in $T$ such that its size (i.e, no. of vertices in the tree) is atmost $\zeta$. If no such sub-tree exists, the output is FAIL. 
	
	}

}



\subsection{Definition - Diameter of a tree}
In Graph theory, the \textit{diameter} of a weighted directed rooted tree is defined as the longest path (without considering the weights of edges in the path) from root to any leaf. Similarly, the \textit{diameter} of  a weighted undirected tree is defined as the longest path (without considering the weights of edges in the path) between any two leaves. Here, length of a path refers to the number of edges in that path.

\subsection{Instance relaxation}
Suppose that an input to the DCST problem (or to any of the aforementioned problems) has some edge weights in the graph as $\infty$, meaning these edges are not present in the input graph. We will show that such weights can be replaced with some large enough quantity $M$ that is only polynomial in the input size and the graph would still retain the original Steiner tree as the optimal one. 

Although the following claim is proved only for the DCST problem, the same argument works for the other three problems as well. This means that without loss of generality, we can always assume that the input graph is a complete graph.

\begin{claim}
	\label{claim:1}
	Denoting the input size of a problem by $m$, if the DCST problem where no edge weights are $\infty$ can be solved in time $p(m)$, then the general DCST problem can also be solved in time $O(p(2.m^2))$. The converse is also true. 
\end{claim}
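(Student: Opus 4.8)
The plan is to establish the two implications separately; the ``only if'' direction --- turning a general instance into one with no $\infty$ weights --- carries all the content, while the converse is essentially a tautology. For the forward direction, given a general DCST instance $I=(V,E,T,D)$, I would let $W$ be the sum of all the \emph{finite} edge weights of $G$, put $M:=W+1$, and build a new instance $I'$ on the complete graph on $V$ (with the same $T$ and $D$) by retaining every finite weight unchanged and assigning weight $M$ to every edge that was absent or had weight $\infty$ in $G$. Since $M$ is finite, $I'$ is an $\infty$-free instance, so it can be handled by the assumed $p(\cdot)$-time algorithm.

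Three elementary facts drive the correctness. First, the diameter of a subtree is a function of its edge \emph{set} only, so a subtree of $G$ is diameter-feasible in $I$ iff it is diameter-feasible when viewed in $I'$; in particular every subtree of $G$ is still present in $I'$ with the same feasibility status. Second, any subtree of $G$ has weight at most $W<M$ (it uses a subset of the finite-weight edges), so $\mathrm{opt}(I)\le W$ whenever $I$ admits a solution. Third, any subtree of $I'$ that uses at least one $M$-edge has weight at least $M>W$. Combining these: if $I$ is feasible then every optimal subtree of $I'$ must avoid $M$-edges --- otherwise its weight exceeds $W\ge\mathrm{opt}(I)$, contradicting optimality --- hence it is a genuine subtree of $G$ and $\mathrm{opt}(I')=\mathrm{opt}(I)$; whereas if $I$ is infeasible then no diameter-feasible subtree of $G$ contains $T$, so any diameter-feasible subtree of $I'$ containing $T$ must use an $M$-edge, forcing $\mathrm{opt}(I')\ge M$ (or $I'$ is infeasible outright). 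The reduction is therefore: run the $p(\cdot)$-time algorithm on $I'$; report FAIL if it returns FAIL or a value $\ge M$, and otherwise report the value it returns.

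It remains to bound $|I'|$ in terms of $m=|I|$, and this is the one spot needing care. One must argue that $M$ is polynomially bounded: in the encoding used, the sum $W$ of the edge weights has bit-length $O(m)$, so $M=W+1$ is writable in $O(m)$ symbols; and completing the graph introduces at most one new weight per vertex pair, i.e.\ $O(|V|^2)=O(m)$ new weights of size $O(m)$ each. This yields $|I'|=O(m^2)$, and a direct count in the (adjacency-matrix) encoding gives $|I'|\le 2m^2$, so the general problem is solved in time $O(p(2m^2))$. Finally, the converse holds trivially: an instance with no $\infty$ weights is already a valid general instance, so any algorithm for the general DCST problem solves it as is, with no change in input size. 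The only genuine obstacle, then, is the bookkeeping that makes $M$ --- and hence $I'$ --- polynomial-size; once that is pinned down, correctness follows purely from the weight-independence of the diameter together with the standard ``big-$M$'' argument.
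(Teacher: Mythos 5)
Your proposal is correct and follows essentially the same route as the paper's own proof: the same choice of $M$ as one plus the sum of the finite weights, the same completion of the graph with $M$-weighted edges, the same ``output FAIL iff the returned value is at least $M$'' test justified by the standard big-$M$ argument, and the same $O(m^2)$ bound on the transformed instance. If anything, your explicit handling of the case where the $\infty$-free solver itself returns FAIL is slightly more careful than the paper's.
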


\begin{proof}
	The proof of the converse is trivial. For the forward direction, assume that $\cal A$ is an algorithm that solves the DCST problem with finite weights in time $p(.)$ in input size. We claim that Algorithm \ref{algo:1} is the desired algorithm that solves DCST.
	
	\begin{algorithm}
		\caption{A reduction from DCST to $\cal A$}
		\textbf{Input:} $(V, E, T, D)$\\
		 $M \gets 1+\sum\limits_{{i,j}: E(i,j)<\infty} E(i,j)$\\
		 $V' \gets V$\\
		 $E' \gets E$\\
		 \If{$E(i,j)=\infty$}{$E'(i,j) \gets M$}
		 $X \gets {\cal A}(V', E', T, D)$\\
		 \If{$X < M$}{\Return $X$}\Else{\Return FAIL}
		 \label{algo:1}
	\end{algorithm}
	
	To observe the correctness of the above algorithm, we consider two cases. First, suppose that $X \ge M$. Then clearly the above algorithm FAILs. For contradiction, suppose that the algorithm errs on this input. It means that there is a Steiner tree $H$ of $G$ with $wt(H)$, which trivially would be strictly less than $M$. By the construction of $G'$, the graph $H$ is also a sub-tree of $G'$. Therefore, the optimal Steiner tree of $G'$ has to be at most $X=wt(H) < M$, a contradiction. Hence, we can see that the algorithm behaves correctly in the case $X \ge M$. Now, consider the case when it happens that $X < M$. Denoting the (constrained) optimal Steiner tree of $G'$ by $H'$, we have $X = wt(H') < M$, meaning that all the edges in $H'$ are strictly less than $M$. This implies that $H'$ is a sub-tree of $G$ as well, suggesting that the optimal weight of a Steiner tree for $G$ would have to be at most $X = wt(H')$. But notice that it cannot be strictly less than $X$. Therefore, it must be that the optimal weight of a Steiner tree for the problem $(V, E, T, D)$ is equal to $X$, which is exactly the output of the above algorithm.
	
	A bulk of the running time of the above algorithm comes from the sub-routine $\cal A$. The input size to this sub-routine is at most $ \log M \le 2.m$ times the original input size $m$, hence the overall run time is $O(p(2.m^2))$.
	
\end{proof}





\subsection{Hardness of the problems}
Since the standard (unconstrained) Steiner tree problem can be seen as a version of DCST with $D=\infty$ or a version of MCST with $\Delta = 1$, or a version of SCST with $\zeta = \infty$, all the four constrained problems we defined are at least as hard as the general Steiner tree problem.
 As it is known that the general problem is NP-hard and is $W[2]$-hard when parameterized by the number of non-terminals (see the books \cite{cygan2015parameterized, Downey_Fellows_1999}), so are all the above problems. In the next section, we give a dynamic programming algorithm for DDCST of running time $3^{|T|}.n^{O(1)}$, which is of similar complexity as for the general problem \cite{cygan2015parameterized}. This puts DDCST in FPT when parameterized by the number of terminals.

\section{A dynamic programming algorithm for DDCST}
\label{sec:2}

The following algorithm is heavily inspired by similar algorithms for the general Steiner Tree problem \cite{cygan2015parameterized} and a network optimization problem \cite{fontes2007diameter}.

Let $(V, E, T, r, D)$ be a DDCST instance. For any subset $S\subseteq T$, a vertex $u\in V$, and an integer $1\le d\le D$, define $f(S,u,d)$ to be the minimum weight of a directed sub-tree of $(V, E)$ rooted at $u$ and  and all the vertices in $S$ are reachable from $u$ in at most $d$ edges of the tree. We may assume such a tree always exists by the instance relaxation idea (Claim \ref{claim:1}). Immediately note that our goal is to compute the value of $f(T,r,D)$. The smaller problems for the DP are the instances with smaller $|S|$ or $d$ values. 

 Recall that the weight of an edge $(u, v)$ is denoted by $E(u,v)$. Accordingly, the base cases for the function $f$ are defined as $f(S,u,1)=\sum \limits_{v \in S} E(u,v)$ for all subsets $S \subseteq T$. To give some intuition about the DP, suppose we are given all optimal Steiner trees corresponding to all $S'\subseteq S$ of diameter at most $d$ and we want to construct an optimal Steiner tree for $S$ of diameter at most $d$. The idea is to consider two cases -- based on whether the degree of the root in the tree is exactly equal to 1. More formally, we will prove the following lemma.

 \begin{figure}[t!]
 	\centering
 	\begin{subfigure}[b]{0.5\textwidth} 
 		\includegraphics[height=2.0in]{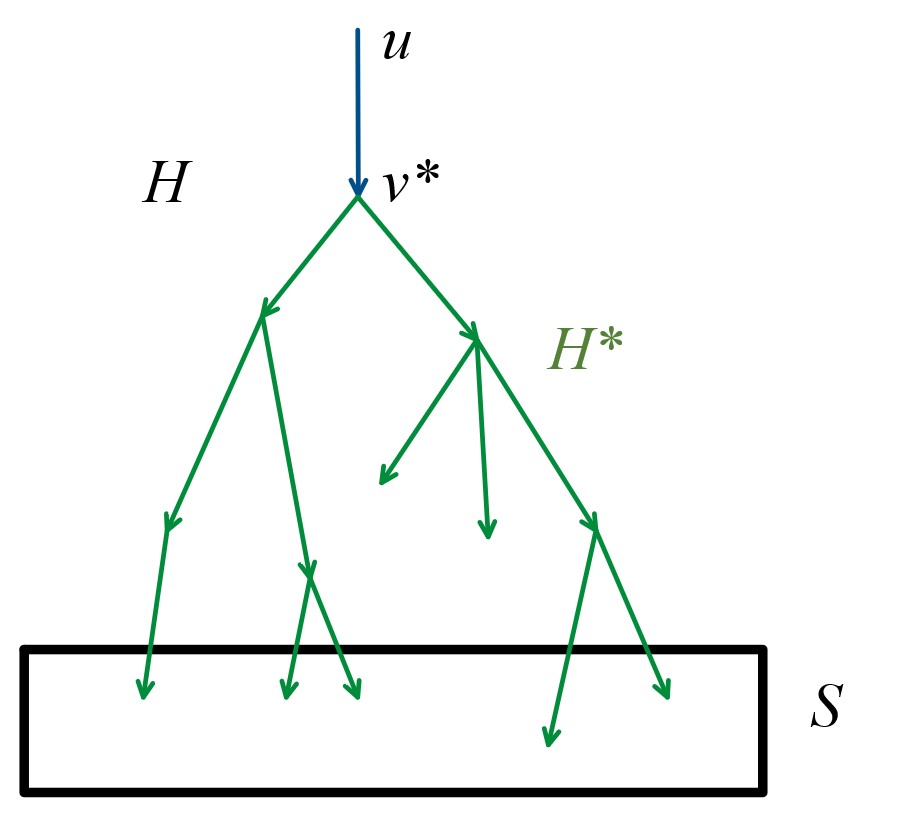}
 		\caption{ }
 		\label{fig:1a}
 	\end{subfigure}%
 	~
 	\begin{subfigure}[b]{0.5\textwidth}
 		\includegraphics[height=1.9in]{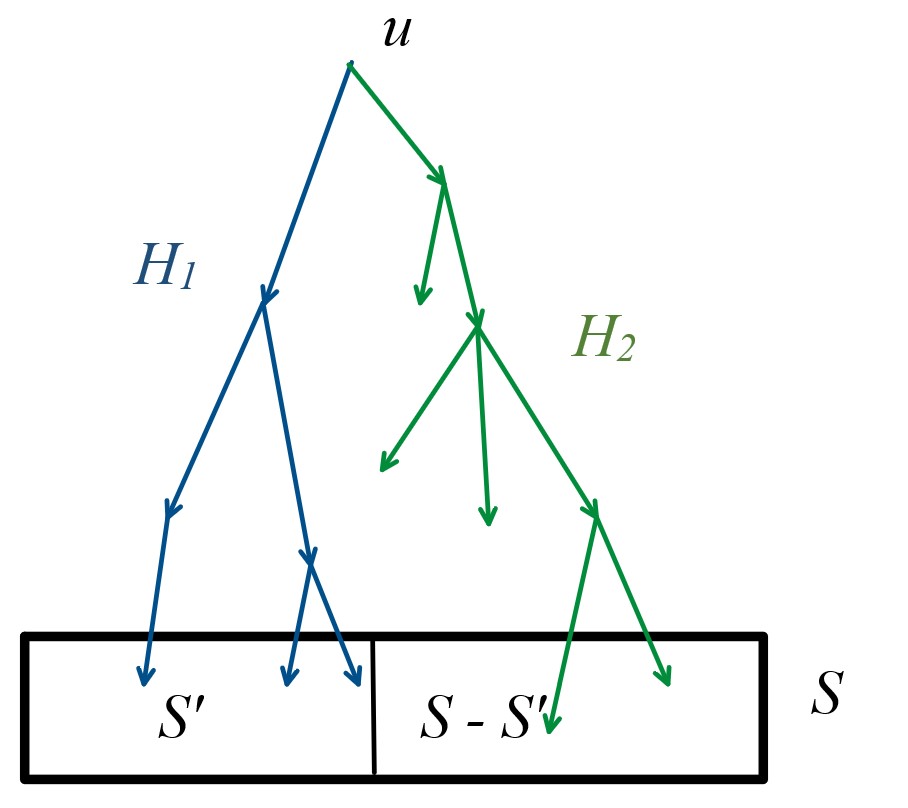}
 		\caption{ }
 		\label{fig:1b}
 	\end{subfigure}
 	
 	\begin{subfigure}[b]{0.5\textwidth}
 		\includegraphics[height=1.7in]{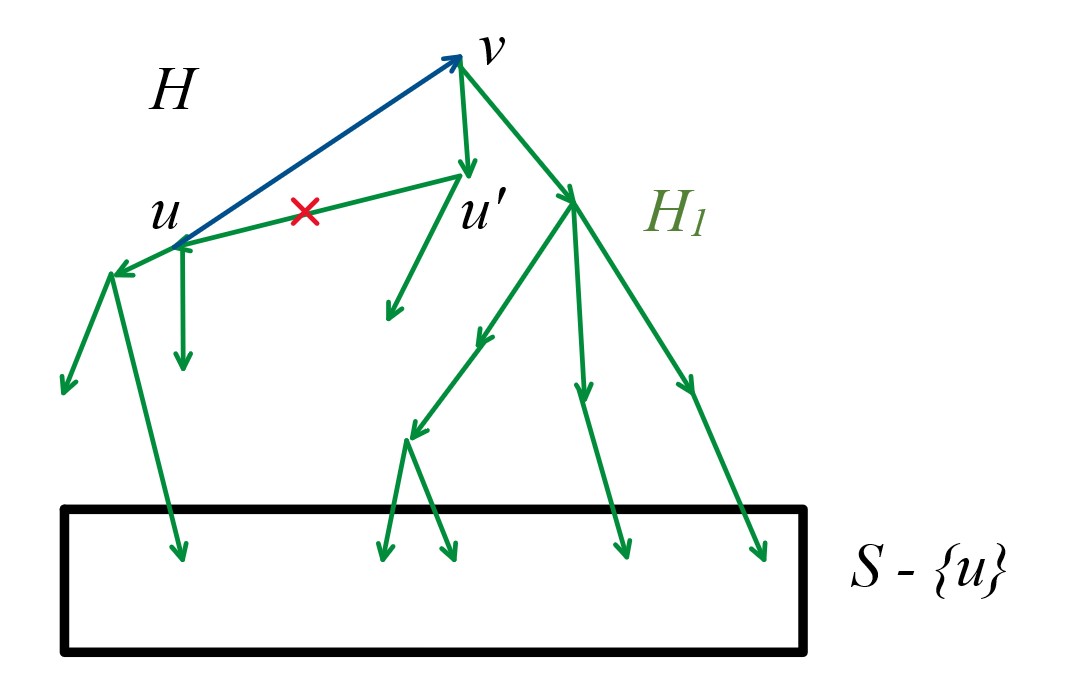}
 		\caption{ }
 		\label{fig:1c}
 	\end{subfigure}%
 	~ 
 	\begin{subfigure}[b]{0.5\textwidth}
 		\includegraphics[height=1.8in]{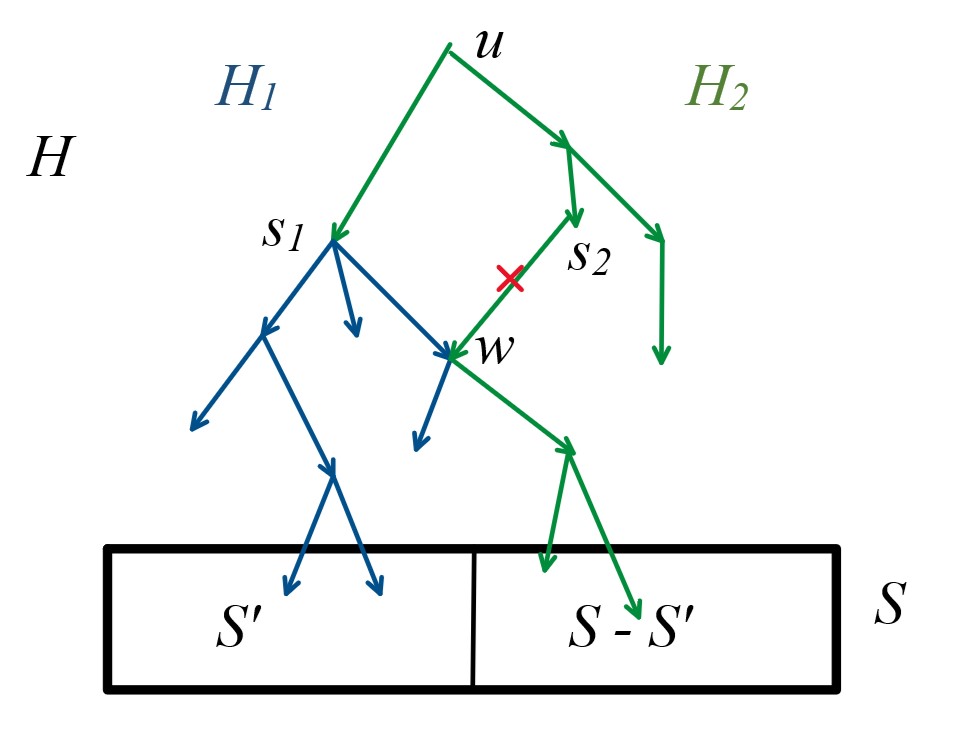}
 		\caption{ }
 		\label{fig:1d}
 	\end{subfigure}
 	\caption{Sub-problems of DDCST}
 \end{figure}
 
 \begin{lemma}
 	\label{lem:1}
 	For any $d\ge 2$ and non-empty subset $S\subseteq T$, 
 	\[f(S,u,d)=\min \bigg \{\min_{\substack{ v \in V}} \{  E(u, v) + f(S\setminus \{u\}, v, d-1)  \}, \min_{\substack{\Phi \neq S' \subsetneq S }} \{ f(S',u,d)+f(S\setminus S',u,d)  \} \bigg \}\]
 \end{lemma}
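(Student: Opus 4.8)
The plan is to prove the identity by establishing the two inequalities ``$\le$'' and ``$\ge$'' separately. Throughout I call an out-arborescence $A$ rooted at $u$ \emph{feasible for $(S,u,d)$} if every vertex of $S$ lies at depth at most $d$ in $A$, so $f(S,u,d)$ is the minimum of $wt(A)$ over such $A$; since edge weights are non-negative we may assume every leaf of an optimal arborescence is a terminal, and we record $f(\emptyset,u,d)=0$. The workhorse for the ``$\le$'' direction is the observation that if $H$ is \emph{any} subgraph of $G$ in which every vertex of a set $W$ is reachable from $u$ using at most $d$ edges, then the breadth-first out-arborescence of $H$ rooted at $u$ is feasible for $(W,u,d)$ and has weight at most $wt(H)$; hence $f(W,u,d)\le wt(H)$. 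This lets us build candidate trees by taking unrestricted unions of smaller trees and then cleaning up, without worrying that a union of arborescences need not be an arborescence.

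For ``$f(S,u,d)\le\min\{\cdots\}$'' I would bound each inner expression from below by $f(S,u,d)$. Fix $v\in V$: take an optimal arborescence $A'$ for $f(S\setminus\{u\},v,d-1)$ and let $H$ consist of $A'$ together with the vertex $u$ and the edge $(u,v)$; then every vertex of $S$ is reachable from $u$ in $H$ within $d$ edges and $wt(H)\le E(u,v)+f(S\setminus\{u\},v,d-1)$, so the workhorse gives $f(S,u,d)\le E(u,v)+f(S\setminus\{u\},v,d-1)$. Fix $\emptyset\ne S'\subsetneq S$: take optimal arborescences $B,C$ for $f(S',u,d)$ and $f(S\setminus S',u,d)$ and set $H=B\cup C$; every vertex of $S$ is reachable from $u$ in $H$ within $d$ edges and $wt(H)\le wt(B)+wt(C)$, so $f(S,u,d)\le f(S',u,d)+f(S\setminus S',u,d)$. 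Taking minima over $v$ and over $S'$ yields the inequality.

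For ``$\ge$'', fix an optimal arborescence $A$ for $f(S,u,d)$ all of whose leaves are terminals and split on the out-degree of $u$ in $A$. If this degree is $0$ then $A=\{u\}$ and $S=\{u\}$, and the first expression with $v=u$ equals $E(u,u)+f(\emptyset,u,d-1)=0=wt(A)$. If it is $1$, with unique out-edge $(u,v)$, then $A'=A-u$ is an arborescence rooted at $v$ feasible for $(S\setminus\{u\},v,d-1)$, so $wt(A)=E(u,v)+wt(A')\ge E(u,v)+f(S\setminus\{u\},v,d-1)$. If it is at least $2$, partition the child-subtrees of $u$ into two non-empty groups, obtaining arborescences $B,C$ rooted at $u$ with $B\cap C=\{u\}$ and with $E(B),E(C)$ partitioning $E(A)$, and put $S'=S\cap V(B)$; since every leaf of $A$ is a terminal, each of $B$ and $C$ contains a terminal other than $u$, so $S'$ is a non-empty proper subset of $S$ and $S\setminus S'\ne\emptyset$, while $B$ is feasible for $(S',u,d)$ and, because $S\setminus S'\subseteq V(C)$, $C$ is feasible for $(S\setminus S',u,d)$; hence $wt(A)=wt(B)+wt(C)\ge f(S',u,d)+f(S\setminus S',u,d)$. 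In every case $wt(A)$ is at least one of the two inner expressions, so $f(S,u,d)\ge\min\{\cdots\}$, and the two bounds together prove the lemma.

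The step I expect to need the most care is the case where the out-degree of $u$ in $A$ is at least $2$: one must check that $\{S',\,S\setminus S'\}$ is genuinely a partition of $S$ into two non-empty (hence proper) parts even when the root $u$ is itself a terminal, since then $u$ lies in both $B$ and $C$ and must be assigned to a single side when defining $S'$. It is exactly the normalization ``all leaves are terminals'' that guarantees the other side still carries a terminal, and this same normalization keeps all vertices of $A$ (and of $B$, $C$, $A'$) at depth at most $d$, a fact the breadth-first-arborescence argument in the ``$\le$'' direction quietly relies on, in particular when the chosen subtree $A'$ already happens to contain $u$.
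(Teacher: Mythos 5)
Your proof is correct and follows essentially the same route as the paper's: both directions use the same case split on the out-degree of the root (a single out-edge of $u$ versus splitting $S$ across at least two child-subtrees), and the ``$\le$'' direction builds the same candidate subgraphs by attaching $(u,v)$ to a smaller tree or taking the union of two trees rooted at $u$. Your breadth-first-arborescence cleanup is a cleaner, more systematic version of the paper's ad hoc edge-removal argument, and your explicit handling of the out-degree-$0$ case and of $u$ possibly lying in $S$ or in $A'$ is slightly more careful than the paper's; just read ``every leaf is a terminal'' as ``every leaf lies in $S$'', since that is what the non-emptiness of $S'$ and $S\setminus S'$ actually requires.
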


\begin{proof}
	Note that a non-empty partition of $S$ is possible only when $|S| \ge 2$. When $|S|=1$, we only consider the minimum over the first part of the RHS. To prove this lemma, we will argue that LHS = RHS.
	
	\vspace{2mm}
		
	\noindent \textbf{LHS $\ge$ RHS}: Suppose that the optimal Steiner tree corresponding to the LHS is $H$, i.e, $wt(H) = f(S, u, d)$. We consider two cases based on whether the degree of $u$ in $H$ is exactly one or not. First, if the degree of the root $u$ in $H$ is exactly 1(see Figure \ref{fig:1a}), say its only neighbor is $v^*$. In this case, consider the sub-tree $H^*$ of $H$ by removing the vertex $u$ and the incident edge from it. By the definition of $H$, we have that $H^*$ is rooted at $v^*$ and all the vertices in $S\setminus \{u\}$ are at a distance of at most $d-1$ from this root. Therefore, $wt(H^*) \ge f(S\setminus \{u\}, v^*, d-1)$ due to the definition of the function $f$. Combining these observations, we get LHS $ = f(S, u, d) = wt(H) = E(u, v^*) + wt(H^*) \ge E(u, v^*) + f(S\setminus \{u\}, v^*, d-1) \ge $ RHS. 
	
	In the second case, when the degree of the root $u$ in $H$ is at least 2 (see Figure \ref{fig:1b}), arbitrarily partition the subset $S$ into non-empty subsets $S'^*$ and $S \setminus S'^*$ such that the least common ancestor of any vertex in $S'*$ and any vertex in $S \setminus S'^*$ is  always $u$. This is possible since the degree of the root is at least 2. Let $H_1$ and $H_2$ refer to the sub-trees of $H$ over the ancestors of $S'^*$ and $S \setminus S'^*$ respectively. In simple terms, $H_1$ and $H_2$ is a \textit{partition} of $H$ and are \textit{connected} only at the root $u$. Clearly, both $H_1$ and $H_2$ have the same root and diameter upper bound as $H$, that is $u$ and $d$ respectively. By the minimality criterion in the definition of $f$, we therefore get $f(S'^*, u, d) \le wt(H_1)$ and $f(S \setminus S'^*, u, d) \le wt(H_2)$. Adding up these inequalities, we get the desired expression: LHS $ = wt(H) = wt(H_1) + wt(H_2)\ge f(S'^*, u, d)+f(S \setminus S'^*, u, d) \ge $ RHS.

	\vspace{2mm}
	
	\noindent \textbf{LHS $\le$ RHS}: We need to show that the LHS is less than or equal to $E(u,v)+f(S \setminus \{u\},v,d-1)$ and $f(S',u,d) + f(S\setminus S',u,d)$ whatever the choice of $v$ or $S'$ may be. 
 	For the first part, we need to show that the LHS is at most $E(u,v)+f(S \setminus \{u\},v,d-1)$. Let the optimum of $f(S \setminus \{u\}, v, d-1)$ be attained via a sub-tree $H_1$ i.e, $wt(H_1) = f(S \setminus \{u\}, v, d-1)$. Now, consider a sub-graph $H$ of $G$ defined as follows(see Figure \ref{fig:1c}): Add the vertex $u$ to $H_1$ if it is not present already and also add the edge $(u, v)$. It is easy to see that all the vertices in $S \setminus \{u\}$ are reachable from $u$ via at most $d-1 +1 = d$ edges in $H$. Note that it can be the case that $H$ is not a tree. But, we can carefully remove some edges (in fact at most one edge) without altering the diameter and the set of terminals reached, other than $u$ itself (if it is a terminal, that is) to make it a sub-tree. To observe this, say $u$ appears in the sub-tree $H$ as a child to some vertex $u'$ (If it does not appear at all, then there is nothing to modify). Remove the edge $(u',u)$ from $H$. This would clearly make it a sub-tree rooted at $u$. Hence, we have the desired inequality $f(S, u, d) \le wt(H) \le E(u, v) + wt(H_1) = E(u, v) + f(S\setminus \{u\}, v, d-1)$. The first inequality comes from the minimality criterion used in defining $f(.)$.
 	
 	For the second part, say $H_1$ and $H_2$ respectively are the optimal Steiner trees corresponding to $f(S',u,d)$ and $f(S\setminus S',u,d)$ (see Figure \ref{fig:1d}). Then, define a sub-graph $H$ of $G$ as the union of $H_1$ and $H_2$. Thus, any vertex in $S$ can be reached from $u$ using at most $d$ edges in $H$. The only problem is that $H$ might not be a tree. To fix this, we can make it a tree without disturbing the diameter or the terminals reached by carefully removing some edges. Suppose there exists two path $u \rightarrow P_1 \rightarrow s_1 \rightarrow w$ and $u\rightarrow P_2 \rightarrow s_2 \rightarrow w$ from the root $u$ to some vertex $w$ in the trees $H_1$ and $H_2$ respectively. Without loss of generality, we can further assume that the intermediate vertices in the two paths are disjoint and that the length of the path $P_1$ is not greater than that of $P_2$. If no such $w$ exists, then clearly, the graph $H$ would already be a tree. We claim that we can safely remove the edge $(s_2,w)$ from $H$. The only vertices that this removal can negatively effect are the terminals in the sub-tree of $H_2$ rooted at $w$. But there would still be paths from $u$ to those vertices of shorter or same length because of the existence of the path $u \rightarrow P_1 \rightarrow w$. By using this principle, we can keep removing edges until the final graph is a tree. We would still have maintained the invariant that all the vertices in $S$ are reachable from $u$ by paths in $H$ of length at most $d$. Finally, $wt(H) \le wt(H_1) + wt(H_2) = f(S', u, d) + f(S \setminus S', u, d)$. Hence, $f(S, u, d) \le wt(H) \le f(S', u, d) + f(S \setminus S', u, d)$.
 	
\end{proof}

The above lemma when translated into a table filling algorithm gives rise to the following Theorem.

\begin{theorem}
	\label{thm:2}
	Given a DDCST instance $(V, E, T, r, D)$, it can be solved in time $O(3^{|T|}.  n^{3})$, where $n=|V|$.
\end{theorem}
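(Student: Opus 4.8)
The plan is to turn the recurrence of Lemma~\ref{lem:1} into a bottom-up dynamic program and then account for the total work.

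\emph{Preprocessing.} By Claim~\ref{claim:1} we may assume that $G$ is a complete directed graph all of whose edge weights are finite; this guarantees that, for any $u$, any $S$, and any $d\ge 1$, a directed subtree rooted at $u$ reaching $S$ within $d$ edges exists, so every $f(S,u,d)$ is well-defined. Moreover, with $M = 1 + \sum_{i,j}E(i,j)$ taken over the originally finite-weight edges, the original instance is feasible if and only if $f(T,r,D) < M$, and in that case $f(T,r,D)$ equals its optimum; otherwise the answer is FAIL. Since any subtree of $G$ has at most $n$ vertices and hence depth at most $n-1$, replacing $D$ by $\min(D,n-1)$ changes nothing, so we may further assume $1 \le D \le n$.

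\emph{The table and its evaluation.} We keep an entry $f(S,u,d)$ for every $S\subseteq T$, $u\in V$, and $1\le d\le D$, for a total of at most $2^{|T|}\cdot n\cdot D$ entries. We first fill the base cases $f(S,u,1)=\sum_{v\in S}E(u,v)$. Then, for $d=2,3,\dots,D$ in this order, and for each such $d$ processing the subsets $S$ in nondecreasing order of $|S|$, we evaluate $f(S,u,d)$ for every $u$ using the formula of Lemma~\ref{lem:1} (when $|S|=1$ only the first minimum is taken, as noted there). This order is consistent with the recurrence, since its first term queries only level $d-1$ and its second term queries only proper --- hence strictly smaller --- subsets of $S$ at the same level $d$. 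For a single entry, the first inner minimum ranges over the $n$ choices of $v$, and the second over the subsets $\emptyset\ne S'\subsetneq S$, which can be enumerated in time $O(2^{|S|})$. Finally we return $f(T,r,D)$ if it is $<M$ and FAIL otherwise; correctness is immediate from Lemma~\ref{lem:1} together with the relaxation argument of Claim~\ref{claim:1}.

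\emph{Running time.} Treating each arithmetic operation on weights as unit cost (otherwise a polynomial factor appears), the base cases cost $O(2^{|T|}\cdot n\cdot|T|)$ in all. Over all entries with $d\ge 2$, the first inner minimum costs $O(n)$ each, for a total of $O(2^{|T|}\cdot n\cdot D\cdot n)=O(2^{|T|}n^3)$ using $D\le n$. The second inner minimum, summed over all $(S,u,d)$, costs $O\!\left(n\cdot D\cdot\sum_{S\subseteq T}2^{|S|}\right)=O(n\cdot D\cdot 3^{|T|})=O(3^{|T|}n^2)$, where we used $\sum_{S\subseteq T}2^{|S|}=\sum_{k=0}^{|T|}\binom{|T|}{k}2^k=3^{|T|}$. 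Summing the three contributions yields the claimed bound $O(3^{|T|}n^3)$.

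\emph{Where the difficulty lies.} The recurrence itself is already established in Lemma~\ref{lem:1}, so there is no genuine mathematical obstacle; the points needing care are (i) verifying that the chosen evaluation order respects the dependency structure of the recurrence, (ii) correctly detecting infeasibility through the choice of $M$ from Claim~\ref{claim:1} and the cap $D\le n$, and (iii) the $3^{|T|}$ bookkeeping, which rests on the identity $\sum_{S\subseteq T}2^{|S|}=3^{|T|}$ rather than on a crude $2^{|T|}\cdot 2^{|T|}$ count of subset pairs. If one also wishes to output an optimal tree and not merely its weight, it suffices to store with each entry which branch of the recurrence attained the minimum and then backtrack, which does not affect the asymptotic running time.
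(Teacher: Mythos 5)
Your proposal is correct and follows essentially the same route as the paper: fill the $2^{|T|}\times n\times D$ table bottom-up using the recurrence of Lemma~\ref{lem:1}, cap $D$ at $n$, and bound the subset-partition work via $\sum_{S\subseteq T}2^{|S|}=3^{|T|}$. Your write-up is in fact somewhat more careful than the paper's (it makes the evaluation order and the feasibility check via $M$ explicit, and fixes the paper's typo of $\binom{n}{i}$ for $\binom{|T|}{i}$ in the final sum), but there is no difference in approach.
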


\begin{proof}
	The dimensions of the DP \textit{table} are $2^{|T|} \times n \times D$. The base cases $f(S, u, 1)$ can be calculated in linear time. Given all the quantities on the RHS of Lemma \ref{lem:1}, to get the LHS, we would need to take time $2^{|S|} + n$. Hence, the total time taken by the algorithm is at most
	\[ ~\sum_{S\subseteq T}~\sum_{v\in V}  ~\sum_{d=1}^{D} ~ 2^{|S|}.n = D.n^2.\sum_{i=1}^{|T|} {n \choose i}.2^i=D.n^{2}.3^{|T|}=3^{|T|}.n^{3}.\]
	
	In the above analysis, we have assumed that $D \le n$ as otherwise, since if $D$ is larger than $n$, we can always work with $D = n$.
\end{proof}

\section{A reduction from DCST to DDCST}
\label{sec:3}

\begin{figure}[t!]
	\centering
	\begin{subfigure}[b]{1\textwidth}
		\includegraphics[height=2.6in]{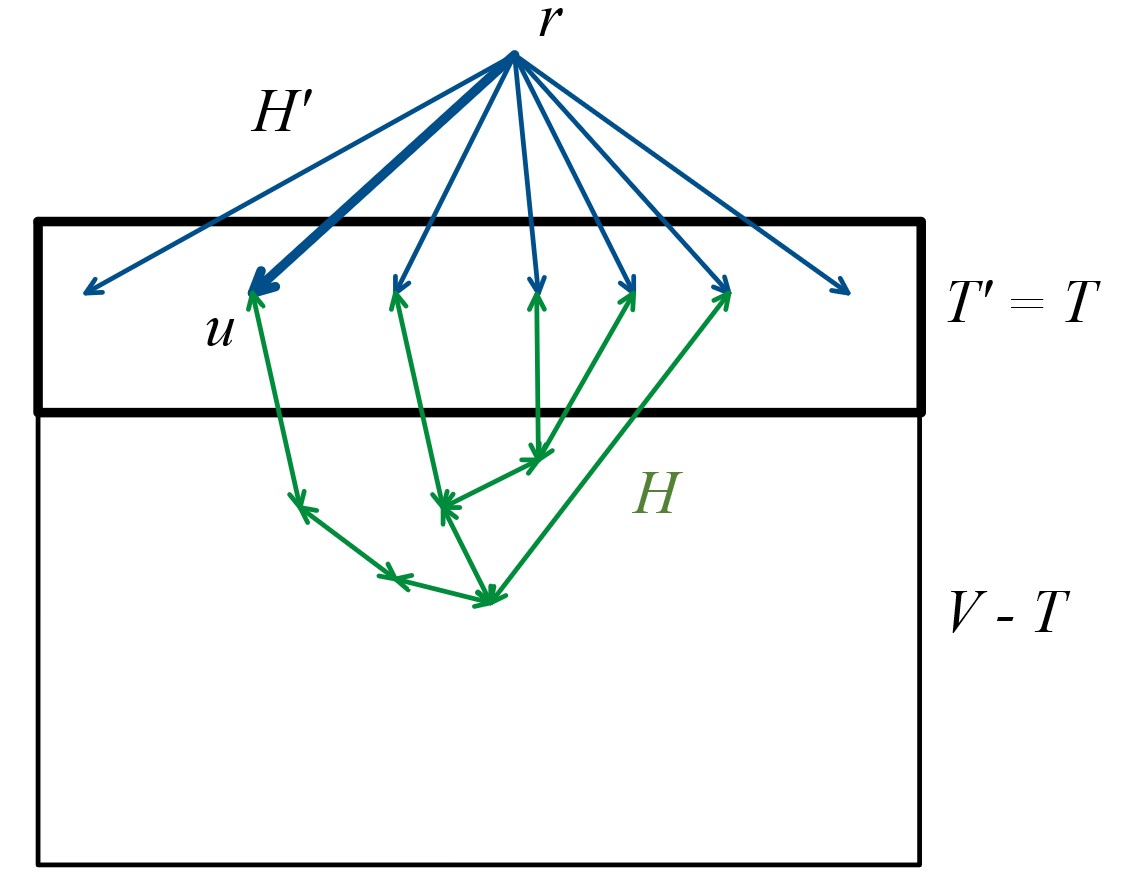}
	\end{subfigure}
	\caption{Undirected to directed DCST reduction}
	\label{fig:2}
\end{figure} 

We note that the above algorithm does not directly work for DCST problem (the undirected version). To show that DCST is also solvable in $3^{|T|} n^{3}$ time, it suffices to prove that the undirected version \textit{reduces} to the directed version, which we describe below.

\begin{theorem}
	\label{thm:3}
	The DCST problem polynomial time Turing reduces to the DDCST problem.
\end{theorem}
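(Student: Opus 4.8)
The plan is to exploit the fact that every undirected tree has a \emph{center}: if its diameter is an even number $2k$, the center is a single vertex and every vertex lies within distance $k$ of it, while if the diameter is odd, the center is an edge. Rooting a candidate Steiner tree at its center thus turns a ``diameter $\le D$'' constraint into a ``depth $\le \lceil D/2\rceil$''-type constraint, which is exactly what DDCST can enforce. So the reduction will \emph{guess} the center, root the tree there, bidirect every edge, attach a fresh root vertex $r$ by a zero-weight arc, call the DDCST oracle, and return the minimum over all guesses. The one annoyance is the odd-diameter case, where the center is an edge rather than a vertex and a naive $\lceil D/2\rceil$ rounding loses a unit; I will sidestep this with an edge-subdivision trick so that a single uniform construction handles both parities.

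Concretely, first build $G'$ from $G$ by subdividing every edge $\{u,v\}$ of weight $w$ into a path $u - m_{uv} - v$, the two new edges carrying weights $w$ and $0$ (a new vertex $m_{uv}$ per edge). Spanning subtrees of $G'$ over $T$ correspond to spanning subtrees of $G$ over $T$ once we prune any degree-one midpoint (midpoints are non-terminal and weights are non-negative, so pruning is harmless), weights are preserved, and every path length between original vertices is doubled; hence a subtree has $G$-diameter $\le D$ iff its $G'$-image has diameter $\le 2D$, which is \emph{even}, so its center is always a \emph{vertex} of $G'$ (possibly a midpoint, but never a terminal). Then, for each vertex $c \in V(G')$, form the directed instance $G'_c$: add a new root $r$, the arc $(r,c)$ of weight $0$, and replace each undirected edge of $G'$ by the two opposite arcs of the same weight (self-loops set to $0$); query the DDCST oracle on $(V(G'_c), E(G'_c), T, r, D+1)$. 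The reduction outputs the minimum of the non-FAIL answers, and FAIL if every call FAILs. There are $O(|V|+|E|)$ calls, each on an instance of size polynomial in $|V|$, so this is a polynomial-time Turing reduction.

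For correctness I will show the returned value equals the optimum $W^*$ of the DCST instance. For ``$\le$'': take an optimal DCST tree $H^*$ (weight $W^*$, $\mathrm{diam}(H^*)\le D$); its $G'$-image $H'^*$ has even diameter $\le 2D$, so choosing its center vertex $c^*$ and orienting $H'^*$ away from $c^*$ gives an out-arborescence of depth $\le D$ containing $T$, and prepending the arc $(r,c^*)$ gives a feasible DDCST solution on $G'_{c^*}$ of weight $W^*$ (the extra arc is free), so that oracle call returns at most $W^*$. For ``$\ge$'': any feasible DDCST solution on some $G'_c$ of weight $X$ is an out-arborescence from $r$ whose unique child is $c$; deleting $r$, forgetting orientations, and pruning midpoint-leaves yields a subtree of $G'$ containing $T$ of weight $\le X$ in which every vertex is within $D$ of $c$, hence of diameter $\le 2D$; contracting the subdivided paths produces a subtree of $G$ containing $T$ of the same weight and diameter $\le D$, so $W^* \le X$. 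Combining the two inequalities (and noting that if DCST is infeasible then, by the contrapositive of the ``$\le$'' construction, every oracle call FAILs) completes the argument.

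\textbf{Main obstacle.} The only real difficulty is the odd-diameter/center-edge mismatch; once the edge subdivision normalizes every candidate tree to even diameter, the construction and its analysis become routine, the remaining care points being (i) placing all the ``slack'' of the split onto one half-edge so that weights transfer exactly, (ii) verifying that pruning non-terminal midpoint-leaves neither increases weight nor breaks feasibility, and (iii) getting the off-by-one in the DDCST bound right (it is $D+1$, accounting for the free arc from $r$ to the guessed center).
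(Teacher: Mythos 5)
Your reduction is correct, but it takes a genuinely different route from the paper's. The paper makes a \emph{single} oracle call: it attaches a fresh root $r$ to every terminal by an edge of heavy weight $M\cdot|T|$, queries DDCST with bound $D+1$, argues that cost-minimality forces $r$ to have exactly one child (necessarily a terminal $u$), and reads off the answer after subtracting $M\cdot|T|$. You instead guess the \emph{center} of the optimal tree, after subdividing every edge so that the diameter becomes even and the center is guaranteed to be a vertex, attach a free root above each guess, and minimize over $O(|V|+|E|)$ oracle calls. Your version is more expensive (many calls rather than an essentially many-one gadget, which the paper's conclusion explicitly hopes for), but it buys soundness of the key step: what DDCST controls is the \emph{depth} from the root, and only rooting at a center converts ``depth at most $k$'' back into ``diameter at most $D$''. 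The paper's reverse direction asserts that deleting $r$ from a depth-$(D+1)$ arborescence whose unique child of $r$ is a terminal $u$ leaves an undirected tree of diameter at most $D$, but this only bounds every vertex's distance from $u$, so the leaf-to-leaf diameter can reach $2D$ when $u$ carries two deep branches (e.g.\ a star on terminals $\{t,x,y\}$ with hub $t$ and $D=1$ is infeasible for DCST, yet the gadgeted DDCST instance with bound $2$ happily returns the star rooted at $t$). Your subdivision trick, the placement of all the weight on one half-edge, the pruning of non-terminal midpoint leaves, and the $D+1$ bookkeeping for the free arc are all verified correctly; the only trivial slip is that infeasibility of DCST forcing all calls to FAIL is the contrapositive of your ``$\ge$'' direction, not your ``$\le$'' direction.
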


\begin{proof}
	Suppose that an algorithm $\cal A$ solves the DDCST problem. Then we claim that the following algorithm (Algorithm \ref{algo:2}) solves the DCST problem.
	\begin{algorithm}
		\caption{A reduction from DCST to DDCST i.e, $\cal A$}
		\textbf{Input:} $(V, E, T, D)$\\
		$M \gets \max \{E(i,j)\}$\\
		$V' \gets V \cup \{r\}$, for a new vertex $r$.\\
		\ForEach{$i,j \in V'$}{\If{$i=j$}{$E'(i,j) \gets 0$} \Else {$E'(i,j) \gets \infty$}}
		\ForEach{$i,j \in V$}{$E'(i,j) \gets E(i, j)$}
		\ForEach{$i \in T$}{$E'(r,i) \gets M.|T|$}
		$X \gets {\cal A}(V', E', T, r, D + 1)$\\
		\Return $X - M.|T|$
		\label{algo:2}
	\end{algorithm}
	
	Before we prove the correctness of this algorithm, note that $M$ is well-defined as we can assume that all the weights are finite without loss of generality (Claim \ref{claim:1}). The correctness boils down to showing that there exists a (directed) Steiner tree of weight at most $X$ and diameter at most $D + 1$ for $(V', E')$ if and only if there exists a (undirected) Steiner tree of weight at most $X - M.|T|$ and diameter at most $D$ for $(V,E)$. For the `if' part of this claim, suppose $H$ was an optimal Steiner tree for $(V, E, T, D)$ with $wt(H) = X - M.|T|$ (see Figure \ref{fig:2}). Consider the directed rooted tree $H'$ that is rooted at $r$ obtained by adding the edge $(r,u)$ to $|H|$ for an arbitrary $u \in T$. The directions for the edges of $H'$ are naturally dictated once a $u$ is fixed. Clearly, $H'$ is a valid Steiner tree for the directed instance of the problem, with diameter at most 1 more than the longest path in the undirected tree $H$ i.e, $D + 1$, and $wt(H') = E'(r,u) + wt(H) = M.|T| + X - M.|T| = X$. 
	 
	 Now, for the other direction of our claim, consider an arbitrary terminal $t \in T$ and consider the Steiner tree made up by the edges $(r, t)$ and $(t,i)$ for all $i \ne t \in T$. It is of diameter 2 $\le D+1$ and of weight at most $M.|T| + (|T|-1).M < 2.M.|T|$, meaning that the optimal Steiner tree for the instance $(V',E',T,r,D+1)$ should be of weight strictly less than $2.M.|T|$. Denoting that optimal Steiner tree by $H'$, we will now argue that the degree of $r$ in $H'$ is exactly 1. For contradiction, suppose this is not the case. That is, $r$ has at least two neighbors in $H'$. But recall that by the construction of $E'$, non-terminals cannot be neighbors of $r$. Hence, $wt(H')$ is at least twice the weight of each root to terminal edge introduced, i.e, at least $2.M.|T|$, thereby contradicting the upper bound we have already established. Hence, the degree of $r$ in $H'$ is exactly 1. Consider the undirected version of the tree obtained by removing the vertex $r$ and the incident edge from $H'$, call it $H$. Clearly, the diameter of $H$ is at most $D + 1 - 1 =D$ and all the terminals are still covered by this tree. Hence, it is a valid solution to the DCST problem for the instance $(V, E, T, D)$ with weight at most $wt(H') - M.|T| \le X - M.|T|$.
	 
	 Finally, we note that this is a polynomial time transformation since only one new vertex and its edges are added to construct the new graph and once the solution of directed instance is known, it involves only a simple subtraction.
\end{proof} 






The above Theorem when combined with Theorem \ref{thm:2} gives the following Corollary.

\begin{corollary}
	\label{cor:4}
	Given a DCST instance $(V, E, T, D)$, it can be solved in time $O(3^{|T|}.  n^{3})$, where $n=|V|$.
\end{corollary}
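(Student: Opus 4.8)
The plan is to simply compose the reduction of Theorem~\ref{thm:3} with the algorithm of Theorem~\ref{thm:2} and check that the overhead is negligible. Given a DCST instance $(V,E,T,D)$ with $n=|V|$, first run Algorithm~\ref{algo:2} to produce the DDCST instance $(V',E',T,r,D+1)$. Observe that this construction touches only $O(n^2)$ edge weights and adds a single new vertex $r$, so it runs in polynomial time, $|V'|=n+1$, and the terminal set $T$ is left unchanged.

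Next I would feed $(V',E',T,r,D+1)$ to the algorithm $\mathcal{A}$ guaranteed by Theorem~\ref{thm:2}. Two small points need to be verified here. First, Algorithm~\ref{algo:2} introduces edges of weight $\infty$ (between originally non-adjacent vertices, and between $r$ and non-terminals), but the DDCST solver already copes with this: its analysis invokes the instance-relaxation of Claim~\ref{claim:1}, which replaces each $\infty$ by a value that is only polynomially large, so the input size stays polynomial in $n$ and the running time bound of Theorem~\ref{thm:2} still applies. Second, the diameter parameter becomes $D+1$; since we may assume $D\le n$ (any larger $D$ can be clamped to $n$, exactly as argued in the proof of Theorem~\ref{thm:2}), we have $D+1\le n+1$, so the diameter factor in the running time remains $O(n)$.

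Putting the pieces together, $\mathcal{A}$ runs in time $O(3^{|T|}(n+1)^3)=O(3^{|T|}n^3)$, and the only post-processing in Algorithm~\ref{algo:2} — subtracting $M\cdot|T|$ from the value returned by $\mathcal{A}$ — costs merely $O(\mathrm{poly}(n))$. By the correctness statement of Theorem~\ref{thm:3}, this is exactly the answer to the original DCST instance, so DCST is solvable in $O(3^{|T|}n^3)$ time. The only place where care is required is this bookkeeping: confirming that the reduction preserves $|T|$ exactly, increases $n$ by at most one, keeps the diameter parameter $O(n)$, and does not inflate the bit-size of the weights beyond polynomial. All of these hold, so there is no genuine obstacle — the statement is a direct corollary.
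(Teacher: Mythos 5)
Your proposal is correct and matches the paper's intended argument: the paper derives Corollary~\ref{cor:4} simply by composing Theorem~\ref{thm:3} with Theorem~\ref{thm:2}, which is exactly what you do. Your additional bookkeeping (terminal set preserved, $|V'|=n+1$, $D+1\le n+1$, weights kept polynomial via Claim~\ref{claim:1}) is a welcome, slightly more careful spelling-out of the same composition.
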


\section{A reduction from SCST to MCST}
\label{sec:4}
In this section, we show a similar kind of a reduction between the other two constrained Steiner tree problems. To do that, we need the following elementary result from graph theory.

\begin{lemma}
	\label{lem:5}
	Let $T$ be an undirected tree with $p$ internal nodes and $q$ leaves. If all the internal nodes have degree at least $\Delta$, then $q \ge (\Delta -2).p$.
\end{lemma}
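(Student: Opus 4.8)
The plan is to prove this by a double-counting argument: combine the handshake lemma (the sum of all vertex degrees equals twice the number of edges) with the basic fact that a tree on $n$ vertices has exactly $n-1$ edges. First I would write $n = p + q$ for the total number of vertices, so that the tree has $p + q - 1$ edges and hence the sum of all vertex degrees equals $2(p + q - 1)$. Next I would lower-bound this degree sum using the hypotheses: every one of the $q$ leaves has degree exactly $1$, and every one of the $p$ internal nodes has degree at least $\Delta$, so the degree sum is at least $q + \Delta p$. Putting the two together gives $2(p + q - 1) \ge q + \Delta p$, which rearranges to $q \ge (\Delta - 2)p + 2$, and in particular $q \ge (\Delta - 2)p$, which is what we want. (The argument in fact yields the slightly stronger bound with the extra additive $2$, but the stated form suffices.)

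The only thing needing a small amount of care is the handling of degenerate trees. If $p = 0$ the conclusion is immediate since $q \ge 0$; if $\Delta \le 2$ the right-hand side $(\Delta-2)p$ is non-positive and the claim is again trivial, so the substantive case is $\Delta \ge 3$ with $p \ge 1$. The single-vertex tree is a boundary case where the lone vertex has degree $0$ and is neither a leaf (degree $1$) nor an internal node of degree at least $\Delta$; one simply notes this case is vacuous for the statement. For every tree with at least one edge the leaf/internal partition of the vertex set is clean, and the counting above applies verbatim.

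I do not expect a genuine obstacle here — the result is elementary and follows purely from the two standard facts about trees plus one line of arithmetic. The one point to be attentive to is using the degree hypothesis in the correct direction (a lower bound on internal-node degrees yields a lower bound on the degree sum, hence the inequality flows the right way into a lower bound on $q$).
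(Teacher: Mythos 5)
Your proof is correct, and it takes a genuinely different route from the paper. The paper roots the tree, sets $d := \Delta-1$, and argues by induction on $p$: it picks an internal node all of whose children are leaves, prunes those $x \ge d$ children, and applies the induction hypothesis to the smaller tree to get $q - x + 1 \ge (d-1)(p-1)$, which unfolds to $q \ge (d-1)p = (\Delta-2)p$. You instead use a single global double count: the handshake lemma gives $\sum_v \deg(v) = 2(p+q-1)$, while the hypotheses give $\sum_v \deg(v) \ge q + \Delta p$, and comparing the two yields $q \ge (\Delta-2)p + 2$. Your argument is shorter, avoids the rooting and the induction bookkeeping entirely, and in fact delivers a strictly stronger conclusion (the extra additive $2$, which is tight for a star); the paper's inductive pruning is the more ``local'' argument and generalizes more readily if one wanted to track which leaves are attached to which internal nodes, but for the bound as stated your counting argument is cleaner. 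Your handling of the degenerate cases ($p=0$, $\Delta \le 2$, the single-vertex tree) is careful and correct.
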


\begin{proof}
	Root the tree at an arbitrary internal vertex $u$ and call this tree $T'$. Then, the number of internal nodes and leaves remains the same in $T'$ as $T$ and the (rooted) degree of all the internal vertices is at least $d := \Delta -1$. We will now obtain a bound of $q \ge (d-1).p$ using induction on $p$. The base case $p=1$ is clearly true. Consider an internal node $v$ of $T'$, whose children are all leaves. Let $T^*$ be the tree obtained by deleting all the $x\ge d$ children of $v$ from $T'$. The number of internal nodes and leaves in $T'$ are $p-1$ and $q-x + 1$ respectively. By the induction hypothesis, we have $q-x +1 \ge (d-1).(p-1)$, which gives the desired bound $q \ge (d-1).p$ on expanding.
\end{proof}


\begin{figure}[t!]
	\centering
	\begin{subfigure}[b]{1\textwidth}
		\includegraphics[height=2.4in]{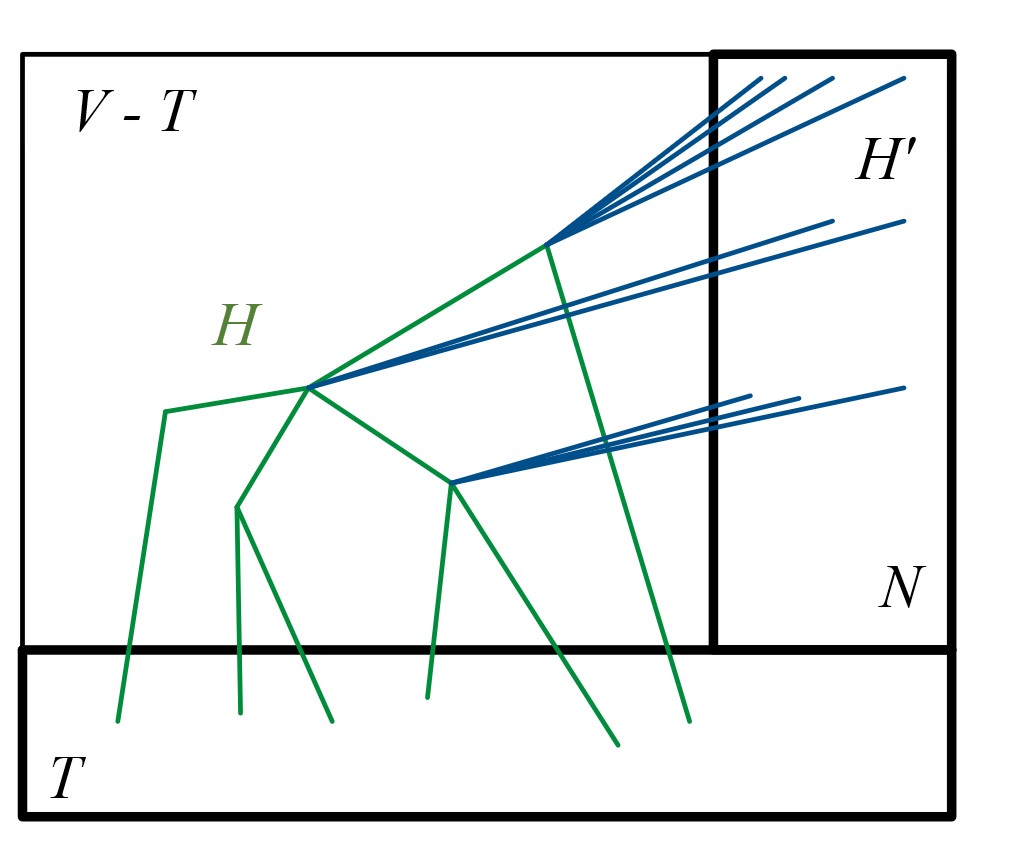}
	\end{subfigure}
	\caption{SCST to MCST reduction}
	\label{fig:3}
\end{figure}

\begin{theorem}
	\label{thm:6}
	The SCST problem polynomial times Turing reduces to the MCST problem.
\end{theorem}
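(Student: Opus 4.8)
The plan is to mimic the structure of the DCST-to-DDCST reduction: given an SCST instance $(V,E,T,\zeta)$, build an MCST instance on a slightly enlarged graph so that the minimum-degree constraint \emph{forces} the size of any feasible Steiner tree to be small, thereby simulating the size bound $\zeta$. First I would introduce a large weight $M$ (say $M = 1 + \sum_{i,j} E(i,j)$, legitimate by Claim~\ref{claim:1}) to dominate all original edge weights, and add a collection of cheap ``padding'' vertices and edges whose only role is to let every original vertex that is used as an internal node of the tree reach the required degree $\Delta$ at negligible cost. The key arithmetic is to choose $\Delta$ in terms of $\zeta$ and $|T|$: by Lemma~\ref{lem:5}, a tree with $p$ internal nodes all of degree $\ge \Delta$ has at least $(\Delta-2)p$ leaves, hence at least $(\Delta-1)p$ vertices total; turning this around, if the tree has size at most $\zeta$ then $p$ is bounded, and conversely a large $\Delta$ pins the number of ``real'' internal nodes. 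The art is to set things up so that the MCST-optimal tree, after deleting the padding, is exactly a minimum-weight Steiner tree of $G$ with at most $\zeta$ vertices.

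Concretely, I would attach to each vertex $v \in V$ a private set of $\Delta-1$ fresh degree-one ``leaf gadget'' vertices, joined to $v$ by edges of weight $0$ (or weight $M$, tuned so the bookkeeping subtraction is clean, as in Algorithm~\ref{algo:2}). Then in the MCST instance any vertex of $V$ that we wish to make internal can be padded up to degree $\Delta$ for free using its private leaves, \emph{but} the padding leaves themselves are never forced to be internal (they are genuine leaves), so they contribute no further constraint. The terminal set stays $T$. One must also prevent the solver from ``cheating'' by using padding vertices as internal nodes to connect distant parts of $G$ — this is ruled out because each padding vertex has only one neighbour (its owner in $V$), so it can only ever be a leaf. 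Running the MCST oracle with parameter $\Delta := \zeta - |T| + 3$ (or whatever value the Lemma~\ref{lem:5} calculation dictates) and then subtracting the total weight of the padding edges from the answer should recover the SCST optimum.

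The correctness argument then has the usual two directions. For the forward direction, take an optimal size-$\le\zeta$ Steiner tree $H$ of $G$; its internal nodes number at most $\zeta - 2$ (a tree of size $\zeta$ has $\ge 2$ leaves), and we pad each of them up to degree $\ge\Delta$ using its private leaf gadgets, producing a feasible MCST solution of the same weight (plus the padding-edge cost). For the reverse direction, take an optimal MCST solution $H'$; delete all padding vertices to get a subtree $H$ of $G$ spanning $T$ — the deletion is safe because padding vertices are leaves of $H'$ — and argue that $H$ has at most $\zeta$ vertices. This last step is the crux: a vertex of $V$ that is internal in $H'$ has degree $\ge\Delta$ in $H'$ but most of that degree may be absorbed by private leaves, so I need the degree-counting to be done carefully, distinguishing real neighbours from padding neighbours, to conclude via Lemma~\ref{lem:5} that the number of real internal nodes — and hence $|V(H)|$ — is at most $\zeta$. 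The main obstacle I anticipate is exactly pinning down the right value of $\Delta$ and the right number of padding leaves per vertex so that (i) every size-$\le\zeta$ Steiner tree of $G$ lifts to a feasible MCST tree, and (ii) every feasible MCST tree projects back to a size-$\le\zeta$ Steiner tree — i.e., making Lemma~\ref{lem:5} bite in both directions simultaneously — together with handling the degenerate cases ($|T|\le 2$, or $\zeta$ so large the constraint is vacuous) and verifying, as in the previous reductions, that the whole construction is polynomial in the input size and that the additive weight correction is exact.
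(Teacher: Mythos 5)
Your construction has the right ingredients (zero-weight auxiliary leaf vertices plus Lemma~\ref{lem:5}), but as set up it does not enforce the size bound, and you correctly sense this at the point you call ``the crux'' --- unfortunately that crux cannot be closed with your gadget. If each $v\in V$ owns a private pool of $\Delta-1$ non-terminal padding leaves available at cost $0$, then \emph{every} Steiner tree of $G$, of arbitrary size, lifts to a feasible MCST solution of the same weight: any internal node already has at least two real neighbours, so attaching its private leaves brings its degree to at least $\Delta+1$ for free. The minimum-degree constraint thus becomes vacuous, the oracle returns the unconstrained Steiner-tree optimum, and the reverse direction fails --- no amount of careful degree-counting can recover a bound of $\zeta$ on $|V(H)|$, because nothing in the instance bounds how many padding leaves the solution may use in total.

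The missing idea is to control the \emph{total number of leaves} of the MCST solution rather than the per-vertex padding supply. The paper does this by declaring the padding vertices to be \emph{terminals}: it adds exactly $|N|=(\zeta-|T|)\cdot 2\zeta$ new vertices $\eta_i^j$, joined by weight-$0$ edges to every non-terminal of $V$ (not privately to one owner), sets $T'=T\cup N$ and $\Delta=2\zeta$. Under the standing convention that the leaves of an optimal Steiner tree are exactly the terminals, the leaf count of any feasible $H'$ is pinned at $|T|+(\zeta-|T|)\cdot 2\zeta$; if the projected tree $H$ had size $s\ge\zeta+1$, then $H'$ would have at least $s-|T|$ internal nodes and Lemma~\ref{lem:5} would force at least $(2\zeta-2)(\zeta+1-|T|)$ leaves, contradicting the fixed leaf count whenever $|T|>2$. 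So Lemma~\ref{lem:5} is used as an \emph{upper} bound on the number of internal nodes only because the number of leaves is exogenously fixed --- that is the step your private, non-terminal leaf gadgets cannot supply. (Your forward direction survives essentially unchanged once the padding is reorganized this way, and no additive weight correction is needed since all new edges have weight $0$.)
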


\begin{proof}
		Suppose that an algorithm $\cal A$ solves the MCST problem. Then we claim that the following algorithm (Algorithm \ref{algo:3}) solves the SCST problem.
	\begin{algorithm}
		\caption{A reduction from SCST to MCST i.e, $\cal A$}
		\textbf{Input:} $(V, E, T, \zeta)$\\
		$M \gets \max \{E(i,j)\}$\\
		$\alpha \gets \zeta - |T|$\\
		$\beta \gets 2.|\zeta|$\\
		$N \gets \{\eta_i^j : 1\le i \le \alpha \text{~and~} 1\le j\le \beta\}$, the set of new vertices.\\
		$V' \gets V \cup N$\\
		$T' \gets T \cup N$\\
		\ForEach{$i,j \in V'$}{\If{$i=j$}{$E'(i,j) \gets 0$} \Else {$E'(i,j) \gets \infty$}}
		\ForEach{$i,j \in V$}{$E'(i,j) \gets E(i, j)$}
		\ForEach{$i \in V \setminus T$ and $\eta \in N$}{$E'(i,\eta) \gets 0$\\$E'(\eta, i) \gets 0$}
		$X \gets {\cal A}(V', E', T', 2.\zeta)$\\
		\Return $X$
		\label{algo:3}
	\end{algorithm}

	By instance relaxation (Claim \ref{claim:1}), we are free to give edge weights as 0 and $\infty$, we can appropriately scale up or scale down the weights if needed. We will make use of a trick that was alluded to in the Introduction -- we will assume that in optimal Steiner trees (be it SCST or MCST), the leaves are exactly the set of terminals. This trick although generally invoked for the standard Steiner tree also naturally extends to these constrained versions. Now, to show the correctness of the above algorithm, we will argue that there exists a Steiner tree of size at most $\zeta$ in $G$ and weight at most $X$ if and only if there exists a Steiner tree of the same weight with minimum-degree at least $2.\zeta$ and weight at most $X$ in $G'$.
	
	To prove the forward implication, let $H$ be a Steiner tree of size at most $\zeta$ in $G$ with $wt(H) \le X$ (see Figure \ref{fig:3}). Since, the leaves of $H$ should exactly correspond to the terminals, we can say that the number of internal nodes of $H$, say $I$ is at most $\zeta - |T|$. We now describe a Steiner tree $H'$ for the MCST instance $(V', E', T', 2.\zeta)$. Besides all the vertices and edges that are in $H$, we add additional edges to $H'$. From each internal node $u$ of $H$, add edges to $2.\zeta$ many new vertices $\eta^i_j$'s. While doing this, we can make sure that we do not pick the same $\eta^i_j$ twice since the overall number of new vertices available is equal to $(\zeta - |T|).2.\zeta$ which greater than or equal to $I \times 2.\zeta$ since we have $I \le \zeta - |T|$. If there are some unused $\eta^i_j$'s left over, we can make them adjacent to an arbitrary internal node of $H$. This finishes the construction of $H'$. It can be observed that the weight of $H'$ is the same as that of $H$ since all the new edges we have added are of weight zero. Moreover, all the terminals $T' = T \cup N$ are contained in $H'$, and the degrees of all the internal vertices are at least $2.\zeta$ as required. Hence, we have a Steiner tree for the MCST instance $(V', E', T', 2.\zeta)$ of weight at most $X$.
	
	For the other direction, let $H'$ be a Steiner tree of minimum internal degree at least $2.\zeta$ and $wt(H') \le X$. Define the sub-tree $H$ to be the graph obtained upon deleting all the $\eta^i_j$ vertices and the edges of $H'$ incident upon them.
	Even after these deletions, we can see that the resultant tree $H$ contains all the original terminals $T$ and is of weight at most $X$. We claim that the size of $H$ is at most $\zeta$. For contradiction, say its size were some $s \ge \zeta +1$. This means that the number of internal nodes of $H'$ is at least $s - |T|$. Combined with the fact that the degree of the internal nodes of $H'$ is at least $2.\zeta$, using Lemma \ref{lem:5}, we have that the number of leaves in $H'$ is at least $(2.\zeta - 2).(s- |T|)\ge (2.\zeta -1).(\zeta + 1 - |T|)$. But we know that the number of leaves is exactly the number of terminals in the new graph, i.e, $|T'| = |T| + |N| = |T| + (\zeta - |T|).2.\zeta$. Using these two bounds, we get
	
	$$|T| + (\zeta - |T|).2.\zeta \ge (2.\zeta -2).(\zeta + 1 - |T|)$$
	
	which upon simplifying yields $|T| \le 2$, which is false for the interesting/non-trivial instances of the problem. Hence, the size of $H$ is at most $\zeta$, meaning it is a valid SCST solution for $(V, E, T, \zeta)$ with weight at most $X$.
	
	The run-time of the reduction is polynomial in the input sizes as $|N|=2\zeta^2 \le 2|V|^2=poly(|V|)$ and defining the weighted adjacency matrix $E'$ and $T'$ only demands an overhead of linear  amount in $|E|$ and $|N|$.
	
\end{proof}

\section{Euclidean Steiner tree problem}
\label{sec:5}

Till now, we have seen the combinatorial or graph theoretic aspects of Steiner trees. However, there are wonderful insights to obtain from a more geometric definition of the Steiner problem. The combinatorial version may be seen a discrete precursor in understanding the Euclidean version. In fact, it will be clear that the Euclidean version can be approximated naturally by the discrete version by making the full vertices set as a grid in the plane and giving Euclidean distances as the edge weights. The terminals would simply be the vertices corresponding to the input points of the Euclidean version. We start by giving some definitions followed by a standard algorithm to compute the Euclidean Steiner tree.

\subsection{Definition}
\noindent\fbox{
	\parbox{13.5cm}{
		{\sc Euclidean Steiner Tree (EST)}  \\
		\textsf{\bfseries Input:} $N$ points in a Euclidean plane \\
		\textsf{\bfseries Output:} Minimum length of a (possibly sharp) curve passing through the input points.
	}
}

\begin{figure}[t!]
	\centering
	\begin{subfigure}[b]{1\textwidth}
		\includegraphics[height=2.8in]{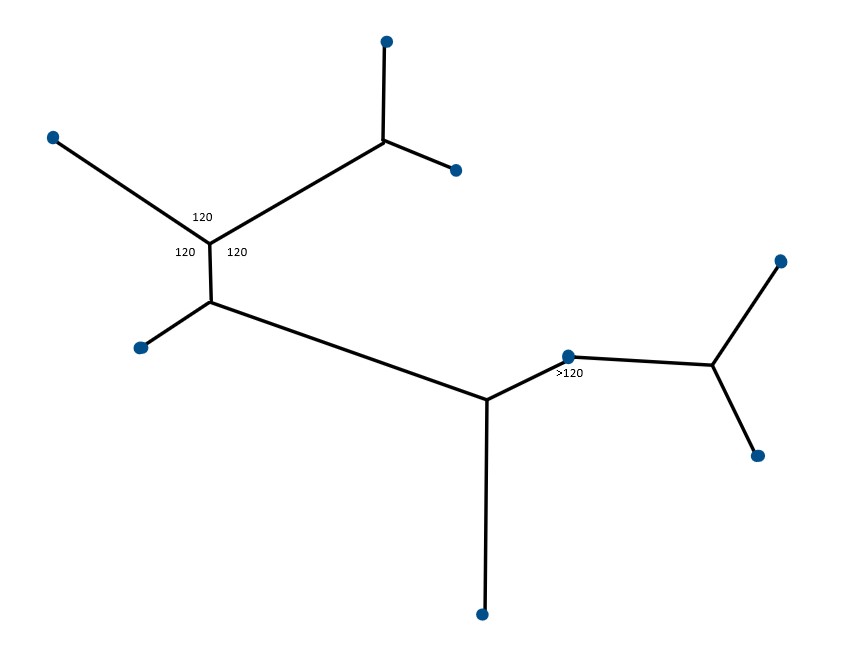}
		\caption{A solution to eight input points}
		\label{fig:4b}
	\end{subfigure}
	
	\begin{subfigure}[b]{1\textwidth}
		\includegraphics[height=2.8in]{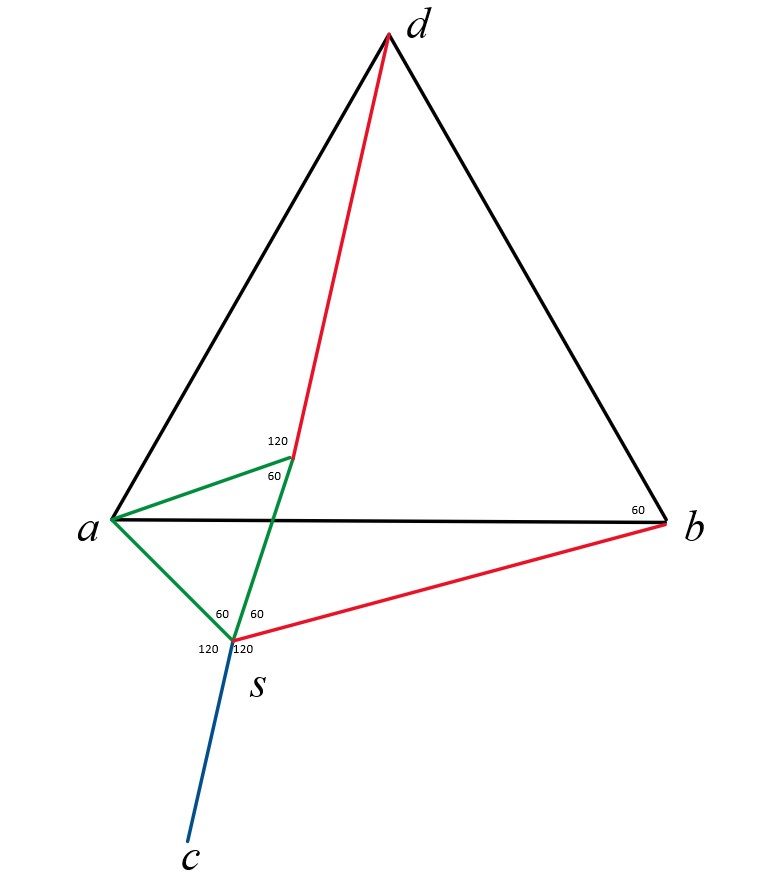}
		\caption{Remove $a,b$ and add $d$ to find the solution}
		\label{fig:4a}
	\end{subfigure}
	\caption{Euclidean Steiner tree}
\end{figure}

\vspace{2mm}

A critical concept to understand to solve the Euclidean version of the Steiner problem is the \textit{Fermat point} of a triangle. It is a point from which the total distance to the three vertices of the triangle is minimum. This point is inside the triangle if all the angles of the triangle are less than $120^\circ$. Otherwise the Fermat point is the vertex that is greater than $120^\circ$. An important property in the former case is that the three edges subtend an angle exactly equal to $120^\circ$ at the Fermat point (see \cite{fermat}).

Note that the Euclidean definition is different from the discrete version of {\sc ST} as there can be extra auxiliary points (called Steiner points) in the minimum cost solution (If no auxiliary points were allowed, it degenerates to the minimum spanning tree problem). For $N=3$, the solution can be computed easily: the only Steiner point is the Fermat point of the triangle. To argue the correctness, we appeal to the definition of Fermat point and the fact that a straight line segment is the shortest curve connecting any two given points. This idea can be used to arrive at the fact that the minimum Steiner tree can have at most $N-2$ ~Steiner points and each `Steiner junction' is $120^\circ$-symmetric (see Figure \ref{fig:4b}). We are now ready for a high level overview of one of the first exact algorithms for this problem \cite{Melzak1961OnTP}. There have been many other polynomial time approximation schemes and heuristics for this problem \cite{eucliedeanoverview, euclideanhistory, Soothill2010TheES}. 

\subsection{Melzak's algorithm - An overview}

A \textit{topology} of a  Steiner tree $T$ is the description of the structure of $T$. This includes specifying the whole adjacency list of $T$, without the distances between the points or the co-ordinates of the Steiner points of $T$. The idea is to go over all possible topologies and search for the optimal Steiner tree.

 Suppose we are given a Steiner tree's topology. Take two vertices, say $a$ and $b$ adjacent to some Steiner point $s$ (see Figure \ref{fig:4a}). We may assume that such an $s$ exists as otherwise, we can directly compute the minimum spanning tree for the given points and output it. Had we known the exact co-ordinates of $s$, we could simply remove the points $a$ and $b$ and introduce $s$ to the set of points, making the total number points $N-1$. Once we find a Steiner tree corresponding to these $N-1$ points, we can add the length $\overline{as} + \overline{bs}$ to it to get the final answer. Applying this recursively, we will get a polynomial time algorithm. The problem, however is that we only know the topology of the solution and we have no idea about the exact co-ordinates of the Steiner points. However, we can still do something very similar. The main idea is that in the optimal solution, the value $\overline{as} + \overline{bs} + \overline{cs}$ would be equal to $\overline{cd}$, where $c$ is the third (auxiliary or original) point that is involved at the Steiner junction of $s$ and $d$ is a point such that the triangle $\triangle abd$ is equilateral. This is evident once we notice some equi-length segments, which are colour coded in Figure \ref{fig:4a}. By this observation, we can remove the two points $a$ and $b$, and add the point $d$ and solve the problem for these $N-1$ points. But there would be two potential choices for $d$, one on each side of the line $ab$. Thus, we will have to solve two sub-problems each of size $N-1$ in the recursion, resulting in a recursive algorithm with running time $2^N.poly(N)$ to find the output for a given topology. Directly using the calculation for the number of possible topologies, the overall running time of Melzak's algorithm turns out to be $2^N.poly(N).(2N-4)!/[2^{N-2}(N-2)!] = exp(O(N))$.

\section{Conclusion: A note about the reductions}
\label{sec:concl}
The reductions in Sections \ref{sec:3} and \ref{sec:4}, although were dealt as polynomial time reductions, they are much stronger. The constrained parameter of the harder problem only depends on the constraint parameter of the easier problem and the relationship is quite simple ($D+1 $ and $2\zeta $). This may termed as a parameterized reduction when parameterized by the constrained parameter (diameter/min-degree/size), but even that is a weak statement, since the runtime of the reduction is only polynomial, unlike FPT time in a parameterized reduction. Note that the reduction in Section \ref{sec:4} is a many-one reduction. One could try to improve the reduction in Section \ref{sec:3} to make it many-one too. Once could try to accomplish similar reductions DCST $\le$ SCST (so that DCST $\le$ MCST, by transitivity) or any of the other possible reductions, especially the reverse reductions. That would concretely establish that all these problems are indeed very closely related.

\vspace{3mm}

\noindent \textbf{Acknowledgments:} We thank Prof.~Narayanaswamy N.S.~for introducing us to this problem and for his valuable comments.



\bibliographystyle{plain}
\bibliography{ref}

\begin{thebibliography}{10}

\bibitem{arora}
Sanjeev Arora.
\newblock Polynomial time approximation schemes for euclidean traveling
  salesman and other geometric problems.
\newblock 45(5), 1998.

\bibitem{euclideanhistory}
Marcus Brazil, Ronald Graham, Doreen Thomas, and Martin Zachariasen.
\newblock On the history of the euclidean steiner tree problem.
\newblock {\em Archive for History of Exact Sciences}, 68:327--354, 05 2014.

\bibitem{cygan2015parameterized}
Marek Cygan, Fedor~V Fomin, {\L}ukasz Kowalik, Daniel Lokshtanov, D{\'a}niel
  Marx, Marcin Pilipczuk, Micha{\l} Pilipczuk, and Saket Saurabh.
\newblock {\em Parameterized algorithms}, volume~5.
\newblock Springer, 2015.

\bibitem{ding2011diameter}
WEI Ding, Guohui Lin, and Guoliang Xue.
\newblock Diameter-constrained steiner trees.
\newblock {\em Discrete Mathematics, Algorithms and Applications},
  3(04):491--502, 2011.

\bibitem{Downey_Fellows_1999}
R.~G. Downey and M.~R. Fellows.
\newblock {\em Parameterized Complexity}.
\newblock Springer New York, 1999.

\bibitem{testapprox}
Doratha Drake and Stefan Hougardy.
\newblock On approximation algorithms for the terminal steiner tree problem.
\newblock {\em Information Processing Letters}, 89:15--18, 10 2003.

\bibitem{DSR00}
Ding-Zhu Du, J.M. Smith, and J.H. Rubinstein.
\newblock Advances in steiner trees, 2000.

\bibitem{eucliedeanoverview}
Marcia Fampa, Jon Lee, and Nelson Maculan.
\newblock An overview of exact algorithms for the euclidean steiner tree
  problem in \emph{n}-space.
\newblock {\em Int. Trans. Oper. Res.}, 23(5):861--874, 2016.

\bibitem{fontes2007diameter}
Dalila~BMM Fontes.
\newblock Diameter-constrained trees for general nonlinear cost flow networks.
\newblock In {\em INOC 2007-International Network Optimization Conference},
  2007.

\bibitem{garey1979guide}
Michael~R Garey.
\newblock A guide to the theory of np-completeness.
\newblock {\em Computers and intractability}, 1979.

\bibitem{HR92}
Frank~K Hwang and Dana~S Richards.
\newblock Steiner tree problems.
\newblock {\em Networks}, 22(1):55--89, 1992.

\bibitem{test}
Guo-Hui Lin and Guoliang Xue.
\newblock On the terminal steiner tree problem.
\newblock {\em Inf. Process. Lett.}, 84:103--107, 10 2002.

\bibitem{MARTINEZ2007133}
Fábio~Viduani Martinez, José~Coelho {de Pina}, and José Soares.
\newblock Algorithms for terminal steiner trees.
\newblock {\em Theoretical Computer Science}, 389(1):133--142, 2007.

\bibitem{Melzak1961OnTP}
Z.~A. Melzak.
\newblock On the problem of steiner.
\newblock {\em Canadian Mathematical Bulletin}, 4:143--148, 1961.

\bibitem{imporveapprox}
Gabriel Robins and Alexander Zelikovsky.
\newblock Improved steiner tree approximation in graphs.
\newblock In {\em Proceedings of the Eleventh Annual ACM-SIAM Symposium on
  Discrete Algorithms}, SODA '00, page 770–779, 2000.

\bibitem{fermat}
Pei-Chun Shi.
\newblock The fermat point.
\newblock {\em The University of Georgia}.

\bibitem{Soothill2010TheES}
Germander Soothill.
\newblock The euclidean steiner problem.
\newblock 2010.

\end{thebibliography}

\end{document}